\newtheorem{thm}{Theorem}[section]
\newtheorem*{thm*}{Theorem}
\newtheorem{lemma}[thm]{Lemma}
\newtheorem{coro}[thm]{Corollary}
\newtheorem{prop}[thm]{Proposition}
\newtheorem{defn}[thm]{Definition}
\numberwithin{equation}{section}
\begin{document}

\title{2D Locus Configurations and the Charged Trigonometric Calogero-Moser System}

\author{Greg Muller}
\address{Department of Mathematics,
Louisiana State University, Baton Rouge, LA 70808, USA}
\email{gmuller@lsu.edu}

\begin{abstract}
A central hyperplane arrangement in $\mathbb{C}^2$ with multiplicity is called a `locus configuration' if it satisfies a series of `locus equations' on each hyperplane.  Following \cite{CFV99}, we demonstrate that the first locus equation for each hyperplane corresponds to a force-balancing equation on a related interacting particle system on $\mathbb{C}^*$: the charged trigonometric Calogero-Moser system.  When the particles lie on $S^1\subset \mathbb{C}^*$, there is a unique equilibrium for this system.  For certain classes of particle weight, this is enough to show that all the locus equations are satisfied, producing explicit examples of real locus configurations.  This in turn produces new examples of Schr\"odinger operators with Baker-Akhiezer functions.
\end{abstract}

\maketitle

\section{Introduction.}

\subsection{Locus Equations and Configurations.}
Let $\mathcal{A}$ be a central hyperplane arrangement in $\mathbb{C}^r$ with a positive integer multiplicity $m_i$ assigned to every hyperplane $H_i$.  Define the \emph{potential} $u(x)$ on $\mathbb{C}^r$
\[ u(x):=\sum_{i} \frac{m_i(m_i+1)\langle \alpha_i,\alpha_i\rangle}{\langle \alpha_i,x\rangle^2}\]
where for all $i$, $\alpha_i$ is a normal vector to $H_i$.\footnote{Here, brackets denote the natural $\mathbb{C}$-linear inner product on $\mathbb{C}^r$.}  In \cite{CFV99}, Chalykh, Feigin and Veselov investigate the associated Schr\"odinger operator $L:=\Delta+u(x)$ (where $\Delta$ is the Laplacian on $\mathbb{C}^r$); specifically, the question of when $L$ has a \emph{Baker-Akhiezer function} (BA function).

Chalykh, Feigin and Veselov show that such a $u(x)$ has a Baker-Akhiezer function if and only if $\mathcal{A}$ satisfies a series of equations called the \emph{locus equations}; in such a case $\mathcal{A}$ is called a \emph{locus configuration}.  In this paper, we focus on the geometry of $\mathcal{A}$ and locus configurations, rather than on the BA function $\psi$ and the operator $L$.

We consider a subset of the locus equations, called the \emph{first-locus equations} (so-called because it contains only the first equation in each series); an arrangement satisfying them will be called a \emph{first-locus configuration}.  The significance of this concept is that it can be shown to be equivalent to an interacting particle system being in equilibrium.

\subsection{The Charged Trigonometric Calogero-Moser System.}

The charged trig. Calogero-Moser system describes a collection of $n$ particles with position $(\theta_1,\theta_2,...\theta_n)$ and `charges' $(q_1,q_2,...q_n)$ with interaction potential given by:
\[ \sum_{\substack{1\leq i,j \leq n\\i\neq j}}\frac{q_iq_j}{\sin^2\left(\frac{\theta_j-\theta_i}{2}\right)}\]
This describes $n$ charged particles on the circle repelling each other proportional to the inverse cube of their separation.

Given a central arrangement $\mathcal{A}$ in $\mathbb{C}^2$ with multiplicity, a corresponding ensemble of charged particles $E(\mathcal{A})$ can be constructed (see Section \ref{section: CTCM}).
\begin{thm*}[\textbf{\ref{thm:main1}}] The arrangement $\mathcal{A}$ is a first-locus configuration if and only if the ensemble $E(\mathcal{A})$ is in equilibrium for the charged trig. CM system.
\end{thm*}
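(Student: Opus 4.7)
The plan is to verify the equivalence by expressing both conditions in a common angular parametrization and matching them line-by-line. Each side is a system of equations indexed by the hyperplanes/particles, so it suffices to fix $i$ and show that the first-locus equation at $H_{i}$ coincides with the force-balance equation at the $i$-th particle under the construction $E$.

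First I would write down the first-locus equation at $H_{i}$ explicitly, taking the first equation in the series of locus equations at $H_{i}$ from \cite{CFV99}. This is a sum over $j\neq i$ of a rational function in $\langle\alpha_{i},\alpha_{j}\rangle$, $\langle\alpha_{j},\alpha_{j}\rangle$, and $\langle\alpha_{j},x\rangle\big|_{H_{i}}$, with coefficients polynomial in $m_{j}$. Parametrizing the unit normals by $\alpha_{j}=(\cos\theta_{j},\sin\theta_{j})$ and a unit tangent to $H_{i}$ by $v_{i}=(-\sin\theta_{i},\cos\theta_{i})$, the inner products become $\langle\alpha_{i},\alpha_{j}\rangle=\cos(\theta_{j}-\theta_{i})$, $\langle\alpha_{j},\alpha_{j}\rangle=1$, and $\langle\alpha_{j},v_{i}\rangle=\sin(\theta_{j}-\theta_{i})$, reducing the first-locus equation to a trigonometric sum in the pairwise differences $\theta_{j}-\theta_{i}$.

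On the CM side, differentiating the interaction potential with respect to $\theta_{i}$ and dividing by $q_{i}$ yields the force-balance equation
\[
\sum_{j\neq i}\frac{q_{j}\cos\bigl(\tfrac{\theta_{j}-\theta_{i}}{2}\bigr)}{\sin^{3}\bigl(\tfrac{\theta_{j}-\theta_{i}}{2}\bigr)}=0.
\]
The two sums coincide under the identification $\theta_{j}^{\mathrm{particle}}=2\theta_{j}^{\mathrm{line}}$, which makes $(\theta_{j}^{\mathrm{part}}-\theta_{i}^{\mathrm{part}})/2=\theta_{j}^{\mathrm{line}}-\theta_{i}^{\mathrm{line}}$, together with a matching identification of the charge $q_{j}$ as a scalar multiple of the appropriate polynomial in $m_{j}$. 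The angle-doubling is geometrically natural: real lines through the origin are parametrized by $\mathbb{RP}^{1}$, a double cover of $S^{1}$. This doubling, together with the prescription for $q_{j}$, should be exactly the content of the construction $E(\mathcal{A})$ of Section \ref{section: CTCM}.

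The main obstacle I anticipate is organizational: one must extract the first-locus equation carefully from the CFV series (in the right order, with the right coefficients) and then verify that the scalar normalizations agree with those produced by $E$ on the CM side. The underlying trigonometric identity is elementary, and the passage to the complex setting is automatic by analytic continuation in the $\theta_{j}$, since both sides are rational in $e^{i\theta_{j}}$.
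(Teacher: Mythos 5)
Your proposal is correct and follows essentially the same route as the paper: parametrize the lines by angles (with the angle-doubling so that half-angles appear, exactly the construction $E(\mathcal{A})$ with $\alpha_i=(\cos(\theta_i/2),\sin(\theta_i/2))$ and charge $q_i=m_i(m_i+1)$), reduce the first-locus equation at $H_i$ to the trigonometric sum $\sum_{j\neq i} q_j\cos\bigl(\tfrac{\theta_j-\theta_i}{2}\bigr)/\sin^3\bigl(\tfrac{\theta_j-\theta_i}{2}\bigr)=0$ by evaluating on a spanning vector of the complex line $H_i$, and identify this with $\tfrac{1}{q_i}\,\partial\mu/\partial\theta_i$. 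The only cosmetic difference is that the paper works with complex $\theta_i\in\mathbb{C}/2\pi\mathbb{Z}$ from the start (so no analytic-continuation step is needed), which your observation that both sides are rational in $e^{i\theta_j}$ already accommodates.
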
 
This perspective of the locus equations was known to Chalykh, Feigin and Veselov in \cite{CFV99}; it is used in the proof of Theorem 4.5, and in the choice of the name `locus configuration'.  However, this fact deserves be to said explicitly.

As a consequence of this physical perspective, an existence and uniqueness theorem can be proven for real arrangements $\mathcal{A}$ which satisfy the first-locus equations.  From this, it follows that such configurations with symmetry in the set of multiplicities must possess corresponding reflection symmetries.  These symmetries, together with the first-locus equations, are enough to produce arrangements $\mathcal{A}$ which satisfy all of the locus equations.
\begin{thm*}[\textbf{\ref{thm:main2}}]
Let $\mathbf{m}=(m_1,m_2,...m_n)$ be a list of positive integers, such that for every $i$ such that $m_i>1$, and for all $j$, $m_{i+j}=m_{i-j}$ (indices are mod $n$).  Then there exists a real 2D locus configuration $\mathcal{A}_{\mathbf{m}}$ with cyclically ordered multiplicities $\mathbf{m}$, which is unique up to rotation of $\mathcal{A}_{\mathbf{m}}$.
\end{thm*}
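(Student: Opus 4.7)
The plan is to follow the strategy announced in the introduction: use Theorem~\ref{thm:main1} together with an existence/uniqueness result for equilibria on $S^1$ to obtain a first-locus configuration, exploit that uniqueness to extract the reflection symmetries forced by the palindromic hypothesis on $\mathbf{m}$, and finally show that these reflection symmetries upgrade the first-locus equations to the full locus equations.

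For existence, I would translate the data $\mathbf{m}$ into an ensemble $E(\mathbf{m})$ of charged particles on $S^1$ via the construction of Section~\ref{section: CTCM}, and appeal to the existence and uniqueness of a real equilibrium for the charged trigonometric Calogero--Moser system announced in the abstract. Applying Theorem~\ref{thm:main1} in reverse then yields a real arrangement $\mathcal{A}_\mathbf{m}$ satisfying the first-locus equations, with cyclically ordered multiplicities $\mathbf{m}$, unique up to rotation.

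For the reflection symmetries, fix an index $i_0$ with $m_{i_0} > 1$. The hypothesis $m_{i_0+j} = m_{i_0-j}$ says that the cyclic list $\mathbf{m}$ is preserved by the dihedral reflection fixing position $i_0$. Reflecting $\mathcal{A}_\mathbf{m}$ through the hyperplane $H_{i_0}$ thus produces another real first-locus configuration with the same cyclic multiplicities; because both configurations share the axis $H_{i_0}$, the rotational ambiguity is killed and uniqueness forces the reflected arrangement to coincide with $\mathcal{A}_\mathbf{m}$. Hence $\mathcal{A}_\mathbf{m}$ is invariant under reflection through $H_{i_0}$ for every $i_0$ with $m_{i_0}>1$.

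For the upgrade from first-locus to full locus, I would observe that at such a hyperplane the ``reduced'' potential $u(x) - m_{i_0}(m_{i_0}+1)\langle\alpha_{i_0},\alpha_{i_0}\rangle/\langle\alpha_{i_0},x\rangle^2$ is an even function of $\langle\alpha_{i_0},x\rangle$ on any transversal, so every odd-order locus equation at $H_{i_0}$ is satisfied automatically; the even-order equations should reduce, via the symmetry, to consequences of the first-locus equation. Hyperplanes with $m_i = 1$ contribute only the first-locus equation, already satisfied by construction. The main obstacle is this last step: rigorously confirming that reflection symmetry through $H_{i_0}$ together with the first-locus equation at $H_{i_0}$ implies every higher locus equation there. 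This will require a careful parity analysis of the explicit locus equations from \cite{CFV99}, checking that reflection exactly cancels the odd-order obstructions and that each remaining even-order equation is forced by the first. A secondary obstacle is ruling out that uniqueness of the equilibrium on $S^1$ is spoiled by discrete ambiguities once one passes back to the arrangement picture, though this should follow directly from the uniqueness clause of Theorem~\ref{thm:main1}.
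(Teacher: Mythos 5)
Your overall route is the same as the paper's: build $\mathcal{A}_{\mathbf{m}}$ from the unique cyclically ordered equilibrium of the charged trigonometric CM system via Theorem \ref{thm:main1}, use the palindromic hypothesis on $\mathbf{m}$ plus uniqueness-up-to-rotation (with the fixed axis $H_{i_0}$ killing the rotational ambiguity) to get reflection invariance at every hyperplane with $m_{i_0}>1$, and then convert those reflection symmetries into the full set of locus equations. The first two stages are exactly the paper's Corollary 5.2 and Lemma 6.1, and they are fine as you state them.

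The gap is the step you yourself flag as the ``main obstacle,'' and your sketch of it is misaimed. You propose that reflection symmetry kills only the ``odd-order'' obstructions and that the ``even-order'' equations must then be deduced from the first-locus equation; in fact there are no even-order equations to worry about, and the first-locus equation plays no role at reflection-symmetric hyperplanes. Every one of the $m_i$ locus equations at $H_i$ has the form
\[ \sum_{j\neq i}\frac{m_j(m_j+1)\langle \alpha_j,\alpha_j\rangle \langle \alpha_j,\alpha_i\rangle^{2k-1}}{\langle \alpha_j,x\rangle^{2k+1}}=0,\qquad x\in H_i, \]
i.e.\ it is an odd-order normal-derivative condition, with $\langle\alpha_j,\alpha_i\rangle$ appearing to the odd power $2k-1$. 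If $R_i\mathcal{A}=\mathcal{A}$, pair the term for $H_j$ with the term for $R_iH_j$: since $\langle R_i\alpha_j,R_i\alpha_j\rangle=\langle\alpha_j,\alpha_j\rangle$, $\langle R_i\alpha_j,x\rangle=\langle\alpha_j,x\rangle$ for $x\in H_i$, and $\langle R_i\alpha_j,\alpha_i\rangle=-\langle\alpha_j,\alpha_i\rangle$, the two terms cancel for every $k$, so all locus equations at $H_i$ hold outright (this is the paper's Lemma \ref{lemma:ref}, feeding into Proposition \ref{prop:locus} and Lemma \ref{lemma:coarse}). With that pairing argument supplied, your proof closes; without it, the final implication from ``first-locus $+$ coarsely Coxeter'' to ``locus configuration'' is asserted rather than proved. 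Your secondary worry about discrete ambiguities in passing back to the arrangement is not an issue: the correspondence $\mathcal{A}\leftrightarrow E(\mathcal{A})$ of Section \ref{section: CTCM} is a bijection between real lines and points of $\mathbb{R}/2\pi\mathbb{Z}$, so uniqueness of the cyclically ordered equilibrium up to rotation transfers directly.
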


\subsection{Acknowledgements.}
This paper most immediately owes its existence to \cite{CFV99} and its authors.  The author is personally grateful to Yuri Berest and Oleg Chalykh for many conversations about the subject matter, as well as for support and advice.  

%
%
%

\section{Hyperplane Arrangements and Locus Configurations.}

Let $\mathcal{A}$ be a central hyperplane arrangement in $\mathbb{C}^r$ consisting of hyperplanes $H_i$, together with multiplicities $m_i\in \mathbb{N}$ for every $H_i$.\footnote{Throughout, we assume that the the arrangement $\mathcal{A}$ contains no degenerate hyperplanes; that is, that $\forall i$, $\langle\alpha_i,\alpha_i\rangle\neq 0$.}  For each hyperplane $H_i$, choose a normal vector $\alpha_i$.  Then, define the function
\[ u(x):=\sum_{i} \frac{m_i(m_i+1)\langle \alpha_i,\alpha_i\rangle}{\langle \alpha_i,x\rangle^2}\]
Note that rescaling any of the $\alpha_i$ by a non-zero constant fixes $u$, and so $u$ is independent of the choice of $\alpha_i$.  

The arrangement $\mathcal{A}$ is called a \textbf{locus configuration} if, for each hyperplane $H_i$ and each integer $k\in 1,2,... m_i$, the function
\[ \sum_{ j| j\neq i}\frac{m_j(m_j+1)\langle \alpha_j,\alpha_j\rangle \langle \alpha_j,\alpha_i\rangle^{2k-1}}{\langle \alpha_j,x\rangle^{2k+1}}=0\]
for all $x$ in the hyperplane $H_i$; the above equation is called \textbf{$k$th locus equation at $H_i$}.  Again, this property is intrinsic to the arrangement and not the scaling of $\alpha$.

The significance of locus configurations comes from the following theorem.
\begin{thm}[\cite{CFV99} Theorem 3.1.]
Let $\mathcal{A}$ and $u(x)$ be as above.  Then the operator $L=\Delta+u(x)$ has a Baker-Akhiezer function if and only if $\mathcal{A}$ is a locus configuration.
\end{thm}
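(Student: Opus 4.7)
My plan is to handle both directions by one and the same calculation: substitute the ansatz $\psi(x,k)=\Phi(x,k)\,e^{\langle k,x\rangle}$, where $\Phi(x,k)=P(x,k)/\prod_i \langle \alpha_i,x\rangle^{m_i}$ for a polynomial $P$ (the standard form of a candidate BA function), into $(L-\langle k,k\rangle)\psi=0$ and then inspect the Laurent expansion of the resulting expression transversally to each hyperplane $H_i$. The exponential factor converts $L$ into a first-order conjugate operator $e^{-\langle k,x\rangle}L e^{\langle k,x\rangle}=\Delta+2\langle k,\nabla\rangle+\langle k,k\rangle+u(x)$, which makes the recursion for the coefficients of $\Phi$ manifest.

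For the ``only if'' direction, assume $\psi$ is a BA function, fix a hyperplane $H_i$, and expand in the transverse coordinate $t:=\langle \alpha_i,x\rangle$ near a generic point of $H_i$. The potential $u(x)$ splits as the singular term $m_i(m_i+1)\langle\alpha_i,\alpha_i\rangle/t^2$ plus a Taylor-regular remainder coming from the other hyperplanes; the pole factor $1/\langle \alpha_i,x\rangle^{m_i}$ of $\Phi$ together with this double pole produces the indicial balance that fixes the leading behavior of $\psi$ automatically. Demanding that $(L-\langle k,k\rangle)\psi$ be regular on $H_i$ then yields a tower of identities, one for each odd order in $t$. After using the quasi-invariance conditions that characterize a BA function on $H_i$ to simplify, the identity at order $2k-1$ collapses to precisely
\[ \sum_{j\neq i}\frac{m_j(m_j+1)\langle \alpha_j,\alpha_j\rangle\langle\alpha_j,\alpha_i\rangle^{2k-1}}{\langle\alpha_j,x\rangle^{2k+1}}\bigg|_{H_i}=0, \]
which is the $k$th locus equation at $H_i$.

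For the converse, I would construct $\psi$ recursively. Expanding $\Phi$ as a finite sum in descending powers of $k$, the equation $(L-\langle k,k\rangle)\psi=0$ becomes a triangular system for the coefficient functions, each step being an inhomogeneous linear PDE on $\mathbb{C}^r$. The obstruction to solving such a PDE in the prescribed class of functions with pole order at most $m_i$ on each $H_i$ is a residue living on $H_i$, which by the same transverse computation as above is a linear combination of the locus-equation expressions. Assuming $\mathcal{A}$ is a locus configuration, every such obstruction vanishes, the recursion proceeds to completion, and the resulting $\psi$ is checked to have the required dependence on $k$ to qualify as a BA function.

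The main obstacle is the converse direction: one must show that the infinite tower of solvability conditions encountered during the recursion reduces exactly to the finite list of locus equations, with no hidden cross-hyperplane obstructions arising from interactions between the poles on $H_i$ and $H_j$. The natural way to keep this under control is to organize the construction hyperplane by hyperplane using the quasi-invariance formulation of the BA property, which decouples the obstructions on different $H_i$'s and lets one match each solvability condition with a specific locus equation by a residue calculation on $H_i$.
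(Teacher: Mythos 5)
This statement is not proved in the paper at all: it is quoted from \cite{CFV99} (their Theorem 3.1) and used as a black box to motivate the search for locus configurations, so there is no internal proof to compare you with; the assessment below is against the argument in \cite{CFV99} itself. Your ``only if'' direction is the right idea: since $\partial_{\alpha_i}^{2k-1}\langle\alpha_j,x\rangle^{-2}$ is a nonzero constant times $\langle\alpha_j,\alpha_i\rangle^{2k-1}\langle\alpha_j,x\rangle^{-(2k+1)}$, the $k$th locus equation at $H_i$ is exactly the vanishing on $H_i$ of the $(2k-1)$st transverse derivative of the regular part of $u$, and necessity is a trivial-local-monodromy analysis of $(L-\langle k,k\rangle)\psi=0$ in the coordinate $t=\langle\alpha_i,x\rangle$, much as you describe. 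One caveat: the paper never defines BA functions, and in \cite{CFV99} the axioms are conditions in the \emph{spectral} variable $k$ (polynomiality of $P$ in $k$ with leading term $\prod_j\langle\alpha_j,k\rangle^{m_j}$, plus quasi-invariance across the hyperplanes $\langle\alpha_j,k\rangle=0$); the eigenvalue equation $L\psi=\langle k,k\rangle\psi$ and the pole structure in $x$ that you take as your starting ansatz are derived properties, so your transverse expansion has to be preceded by that reduction rather than assumed.

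The genuine gap is in the converse. You assert that each solvability obstruction in your descending-in-$k$ recursion ``is a linear combination of the locus-equation expressions,'' but that is precisely the content of the hard half of the theorem: nothing in the sketch shows that the obstructions attached to $H_i$ terminate after the $m_i$ conditions, nor that no cross-hyperplane obstructions survive, and ``organize the construction hyperplane by hyperplane'' is a hope, not an argument. Chalykh--Feigin--Veselov do not run such an open-ended recursion; they produce $\psi$ by an explicit Berest-type construction, applying $(L-\langle k,k\rangle)$ exactly $M=\sum_i m_i$ times to $\bigl(\prod_i\langle\alpha_i,x\rangle^{m_i}\bigr)e^{\langle k,x\rangle}$ and normalizing, and the locus equations enter through one key local lemma: along each $H_i$ they guarantee that $L$ preserves the class of functions whose transverse expansions have the prescribed vanishing odd-order coefficients, so each application of $(L-\langle k,k\rangle)$ lowers the pole order instead of creating inadmissible singular terms. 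Combined with the uniqueness of the BA function, this gives both existence and the eigenfunction property in finitely many verifiable steps. To turn your proposal into a proof you would need to supply an analogue of that lemma (or the explicit formula); without it, the claim that the infinite tower of solvability conditions collapses to the finite list of locus equations is exactly the unproved point.
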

One-dimensional Baker-Akhiezer functions were introduced by Krichever in \cite{Kri77}, and generalized to multiple dimensions in \cite{CFV99}.  BA functions have applications to the bispectral problem, Darboux factorization and the study of Huygens' principle (see \cite{DG86},\cite{VSC93},\cite{BV94},\cite{BV98},\cite{Ber98}).  However, this paper will neither define nor use the BA function produced by the theorem; we will instead focus on constructing locus configurations.

Locus configurations in $\mathbb{C}^2$ are of elementary interest, because all locus configurations are assembled out of two-dimensional ones.
\begin{thm}[\cite{CFV99} Theorem 4.1]
An arrangement $\mathcal{A}$ is a locus configuration if and only if the restriction of $\mathcal{A}$ to any two-dimensional subsystem is a locus configuration.  That is, for any two plane $\pi\subset \mathbb{C}^n$, the vectors $\alpha_i\in \mathcal{A}\cap\pi$ with their multiplicities $m_i$ must satisfy the locus equations.
\end{thm}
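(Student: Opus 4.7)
The plan is to show that, for each hyperplane $H_i$ and each $k \in \{1,\ldots,m_i\}$, the $k$th locus function at $H_i$ decomposes as a sum over the 2-planes $\pi$ through $\alpha_i$ of the corresponding 2D locus functions of the restricted arrangements $\mathcal{A} \cap \pi$.  Both implications of the theorem then follow from this single decomposition.

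First I would fix $H_i$ and $k$, and observe that every $\alpha_j$ with $j \neq i$ and $\alpha_j \not\in \mathbb{C}\alpha_i$ lies in a unique 2-plane $\pi = \mathrm{span}(\alpha_i, \alpha_j)$ containing $\alpha_i$.  Partitioning $\sum_{j\neq i}$ according to this 2-plane gives
\[ f_k(x) := \sum_{j \neq i} \frac{m_j(m_j+1)\langle\alpha_j,\alpha_j\rangle \langle \alpha_j,\alpha_i\rangle^{2k-1}}{\langle\alpha_j,x\rangle^{2k+1}} \ =\ \sum_{\pi \ni \alpha_i}\ \sum_{\alpha_j \in \pi,\, j\neq i} \frac{m_j(m_j+1)\langle\alpha_j,\alpha_j\rangle \langle\alpha_j,\alpha_i\rangle^{2k-1}}{\langle\alpha_j,x\rangle^{2k+1}}. \]
Writing $x = y + z$ with $y \in \pi$ and $z \in L_\pi := \pi^\perp$, two facts emerge: (i) for $\alpha_j \in \pi$, $\langle\alpha_j,x\rangle = \langle\alpha_j,y\rangle$ depends only on $y$; and (ii) $x \in H_i$ forces $y \in H_i \cap \pi$, since $\alpha_i \in \pi$ and $z \perp \alpha_i$.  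Thus for $x \in H_i$, the inner sum indexed by $\pi$ is precisely the $k$th 2D locus function of $\mathcal{A}\cap\pi$ at $\alpha_i$, evaluated at $y$.

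The ``if'' direction is then immediate: each inner sum vanishes by the 2D locus hypothesis, so $f_k \equiv 0$ on $H_i$.  For the ``only if'' direction, I would fix a 2-plane $\pi \ni \alpha_i$ and a generic $y \in H_i \cap \pi$, and send $\|z\| \to \infty$ along a generic direction of $L_\pi$; for $\alpha_j \notin \pi$ the $L_\pi$-component of $\alpha_j$ is nonzero, so $\langle\alpha_j, y+z\rangle$ grows linearly in $z$ and the corresponding term decays like $\|z\|^{-(2k+1)}$, while the inner sum indexed by $\pi$ is constant in $z$.  Since $f_k(y+z)=0$ for every such $z$, the limit forces the $\pi$-inner-sum to vanish on $H_i \cap \pi$, which is the $k$th 2D locus equation for $\mathcal{A}\cap\pi$ at $\alpha_i$.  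I do not anticipate a serious obstacle: once the partition by 2-plane is identified, the ``if'' direction is tautological, and the ``only if'' direction uses only the definite negative homogeneity in $z$ of each contribution from $\alpha_j \notin \pi$.
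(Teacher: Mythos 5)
The paper itself gives no proof of this theorem --- it is imported verbatim from \cite{CFV99} --- so your argument can only be judged on its own merits, and on those it essentially holds up. Grouping the $k$th locus sum at $H_i$ by the 2-plane $\pi=\mathrm{span}(\alpha_i,\alpha_j)$ is precisely the mechanism that makes the theorem true, and both of your directions are sound. In the ``only if'' direction, the limit along $x=y+tz_0$ with $z_0\in\pi^{\perp}$ generic works because the ambient form is nondegenerate, so $(\pi^{\perp})^{\perp}=\pi$ and every $\alpha_j\notin\pi$ pairs nontrivially with a generic $z_0$; the $\pi$-terms are independent of $t$, the remaining terms decay, and vanishing at generic $y$ extends to all of $H_i\cap\pi$ because the 2D locus expression is rational on that line. (The more traditional way to finish is to note that on $H_i$ the inner sum attached to $\pi$ equals a constant times $\langle\alpha_{j_0},x\rangle^{-(2k+1)}$ for any fixed $\alpha_{j_0}\in\pi$ not proportional to $\alpha_i$, and that for distinct planes these powers have poles along distinct codimension-one subspaces of $H_i$ and hence are linearly independent; your limiting argument is an acceptable substitute for that independence statement.) The one caveat concerns the ``if'' direction: over $\mathbb{C}$ the bilinear form can restrict degenerately to $\pi$ even when every $\alpha_i$ is non-isotropic, in which case $\mathbb{C}^n\neq\pi\oplus\pi^{\perp}$ and not every $x\in H_i$ admits your decomposition $x=y+z$. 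This never occurs for the real arrangements the paper actually constructs, and for nondegenerate planes your argument is complete; in general the cleaner route is the proportionality just mentioned: writing $\alpha_j=a_j\alpha_i+b_j\alpha_{j_0}$ for $\alpha_j\in\pi$ gives $\langle\alpha_j,x\rangle=b_j\langle\alpha_{j_0},x\rangle$ on all of $H_i$, which identifies the inner sum on $H_i$ with the 2D locus expression directly, without invoking any orthogonal projection.
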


\section{Symmetry and First-Locus Configurations.}

The first examples of locus configurations are \textbf{Coxeter arrangements}, which are arrangements $\mathcal{A}$ such that for any hyperplane $H\in \mathcal{A}$, the reflection across $H$ leaves $\mathcal{A}$ invariant\footnote{That is, reflection across $H$ sends hyperplanes in $\mathcal{A}$ to hyperplanes in $\mathcal{A}$ with the same multiplicity}.  That these are locus configurations is a consequence of the following lemma.
\begin{lemma}\label{lemma:ref}
Let $\mathcal{A}$ be an arrangement, and $H_i$ a hyperplane in $\mathcal{A}$.  If reflection across $H_i$ leaves $\mathcal{A}$ invariant, then all the locus equations at $H_i$ are satisfied.
\end{lemma}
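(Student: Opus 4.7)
The approach is to exploit the reflection invariance to pair up the terms in each locus equation so they cancel. Let $s_i$ denote the reflection across $H_i$, given by the familiar formula
\[ s_i(v) = v - 2\frac{\langle v,\alpha_i\rangle}{\langle \alpha_i,\alpha_i\rangle}\alpha_i. \]
The plan is to first record three elementary properties of $s_i$ that I will use repeatedly: (a) it preserves the bilinear form, so $\langle s_iv,s_iw\rangle = \langle v,w\rangle$; (b) it is self-adjoint with respect to the form, so $\langle s_iv,w\rangle = \langle v,s_iw\rangle$; and (c) $s_i(\alpha_i) = -\alpha_i$ while $s_i(x) = x$ for all $x \in H_i$.

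Next, using the hypothesis that $s_i$ permutes the hyperplanes of $\mathcal{A}$ while preserving multiplicities, I define an involution $j \mapsto j'$ on the index set $\{j : j \neq i\}$ by requiring $H_{j'} = s_i(H_j)$, with $m_{j'} = m_j$. For each such pair I can choose the normal vectors compatibly: set $\alpha_{j'} := s_i(\alpha_j)$, which is legitimate since the locus equation is invariant under rescaling any $\alpha_j$. The three properties of $s_i$ then give
\[ \langle \alpha_{j'},\alpha_{j'}\rangle = \langle \alpha_j,\alpha_j\rangle, \qquad \langle \alpha_{j'},\alpha_i\rangle = -\langle \alpha_j,\alpha_i\rangle, \qquad \langle \alpha_{j'},x\rangle = \langle \alpha_j,x\rangle \text{ for } x \in H_i. \]
Because the exponent $2k-1$ is odd, the $j'$-term in the $k$th locus equation is exactly the negative of the $j$-term, so pair contributions cancel.

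The only thing that could go wrong is a hyperplane that is its own reflection, i.e.\ $j' = j$. Such an $H_j$ must be orthogonal to $H_i$, meaning $\langle \alpha_j,\alpha_i\rangle = 0$, and then the factor $\langle\alpha_j,\alpha_i\rangle^{2k-1}$ already makes the $j$-term vanish for every $k \geq 1$. Summing over all pairs (and singletons) gives zero, so every locus equation at $H_i$ is satisfied.

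There is no real obstacle here; the key point is that the odd exponent $2k-1$ is what makes the reflection argument produce a sign flip, and this works uniformly for every $k$, yielding all locus equations at once rather than just the first.
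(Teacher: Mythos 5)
Your proof is correct and takes essentially the same pairing-and-cancellation approach as the paper: reflect each index $j$ to $j'$, use isometry, fixedness of $x\in H_i$, and the sign flip $\langle s_i\alpha_j,\alpha_i\rangle=-\langle\alpha_j,\alpha_i\rangle$ together with the odd exponent $2k-1$. If anything you are slightly more careful than the paper's own proof, which says "all the terms can be paired up in this manner" without addressing hyperplanes fixed by the reflection; your observation that $s_i(H_j)=H_j$ with $j\neq i$ forces $\langle\alpha_j,\alpha_i\rangle=0$, so those terms vanish on their own, closes that small gap.
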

\begin{proof}
Let $R_i$ denote the reflection across $H_i$.  Then for any other $H_j$, $j\neq i$, consider the terms in the $k$th locus equation at $H_i$ corresponding to $H_j$ and $R_iH_j$.
\[ \frac{m_j(m_j+1)\langle \alpha_j,\alpha_j\rangle \langle \alpha_j,\alpha_i\rangle^{2k-1}}{\langle \alpha_j,x\rangle^{2k+1}}+\frac{m_j(m_j+1)\langle R_i\alpha_j,R_i\alpha_j\rangle \langle R_i\alpha_j,\alpha_i\rangle^{2k-1}}{\langle R_i\alpha_j,x\rangle^{2k+1}}\]
Since reflection is an isometry, $\langle R_i\alpha_j,R_i\alpha_j\rangle=\langle\alpha_j,\alpha_j\rangle$.  Furthermore, since $x\in H_i$, $\langle R_i\alpha_j,x\rangle=\langle \alpha_j,x\rangle$.  Finally, $\langle R_i\alpha_j,\alpha_i\rangle=-\langle \alpha_j,\alpha_i\rangle$, and so the two terms in the $k$th locus equation corresponding to $H_j$ and $R_iH_j$ cancel out.  Since all the terms can be paired up in this manner, the total sum is zero.
\end{proof}

Chalykh, Feigin and Veselov show that for any locus configuration $\mathcal{A}$, any hyperplanes of so-called `large multiplicity' must have the property that reflection across them leaves $\mathcal{A}$ invariant.  These are the hyperplanes $H_i$ whose multiplicity $m_i$ is greater than the largest number of hyperplanes simultaneously intersecting $H_i$ in a codimension 1 subspace.  If the arrangement is 2-dimensional, then this is one less than the number of hyperplanes in $\mathcal{A}$.

This paper will investigate the extreme of this condition, when there is a reflection symmetry at any hyperplane of multiplicity $>1$.
\begin{defn}
An arrangement $\mathcal{A}$ is called \textbf{coarsely Coxeter} if, for any $H_i\in \mathcal{A}$ such that $m_i>1$, reflection across $H_i$ leaves $\mathcal{A}$ invariant.
\end{defn}
By the lemma, such an arrangement will satisfy the locus equations at any hyperplane of multiplicity $>1$.  Therefore, to be a locus configuration, the locus equations need only be checked at hyperplanes of multiplicity $1$, where there is only a single locus equation.  The remainder of the note is concerned with producing examples of coarsely Coxeter arrangements which are locus configurations.

The only meaningful equations to check for a coarsely Coxeter arrangement are the first locus equations.  Therefore, we introduce the following weaker condition on an arrangement.
\begin{defn}
An arrangement $\mathcal{A}$ is called a \textbf{first-locus configuration} if for every hyperplane $H_i\in \mathcal{A}$, the first locus equation is satisfied at $H_i$; that is, $\forall x\in H_i$,
\[ \sum_{ j| j\neq i}\frac{m_j(m_j+1)\langle \alpha_j,\alpha_j\rangle \langle \alpha_j,\alpha_i\rangle}{\langle \alpha_j,x\rangle^{3}}=0\]
\end{defn}
The usefulness of this concept is the following.
\begin{prop}\label{prop:locus}
Any coarsely Coxeter arrangement which is a first-locus configuration is necessarily a locus configuration.
\end{prop}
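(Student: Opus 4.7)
The plan is to check the locus equations at each hyperplane $H_i \in \mathcal{A}$ separately, splitting into two cases according to whether $m_i > 1$ or $m_i = 1$. The point is that the coarsely Coxeter hypothesis disposes of the large-multiplicity hyperplanes completely, while the first-locus hypothesis is by definition exactly what is needed at the multiplicity-one hyperplanes.

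For the case $m_i > 1$: since $\mathcal{A}$ is coarsely Coxeter, reflection across $H_i$ leaves $\mathcal{A}$ invariant. I would apply Lemma \ref{lemma:ref} directly: it says that in this situation, \emph{all} the locus equations at $H_i$ (for $k = 1, 2, \ldots, m_i$) hold. So no further checking is needed for these hyperplanes.

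For the case $m_i = 1$: looking at the definition of a locus configuration, the locus equations at $H_i$ are indexed by $k \in \{1, 2, \ldots, m_i\}$. When $m_i = 1$ this reduces to the single equation with $k=1$, namely the first locus equation at $H_i$. But this is precisely what the first-locus configuration hypothesis guarantees. So all locus equations at $H_i$ hold here as well.

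Putting the two cases together, every locus equation at every hyperplane is satisfied, so $\mathcal{A}$ is a locus configuration. The main obstacle is essentially none: the argument is a direct bookkeeping consequence of combining the definition of coarsely Coxeter (which invokes Lemma \ref{lemma:ref} to kill higher-$k$ equations) with the definition of first-locus configuration (which supplies the only remaining equation, namely $k=1$ at $m_i = 1$ hyperplanes). The real content of the proposition is conceptual: it justifies why the rest of the paper can focus exclusively on the first locus equation.
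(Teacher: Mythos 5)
Your argument is correct and coincides with the paper's own proof: the same case split into $m_i>1$ hyperplanes (handled by the coarsely Coxeter hypothesis via Lemma \ref{lemma:ref}) and $m_i=1$ hyperplanes (where the only locus equation is the first one, supplied by the first-locus hypothesis). No gaps; you have simply spelled out the bookkeeping slightly more explicitly than the paper does.
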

\begin{proof}
At every hyperplane of multiplicity 1, the locus equations are satisfied by the first-locus condition.  At any hyperplane of multiplicity $>1$, the locus equations are satisfied by the coarsely Coxeter condition and Lemma \ref{lemma:ref}.
\end{proof}


\section{The Charged Trigonometric Calogero-Moser System.}\label{section: CTCM}

From now on, we reduce the scope of our investigation to arrangements in $\mathbb{C}^2$.  This is a significant simplification, because each hyperplane $H_i$ is a complex line, and can now be spanned by a single vector $\alpha_i^\perp$.  The locus equations then need only be checked to vanish at $\alpha_i^\perp$.

Instead of thinking of a collection of lines $\mathcal{A}$ in $\mathbb{C}^2$, we can think of a collection of points $E(\mathcal{A})$ in $\mathbb{C}/2\pi\mathbb{Z}$, the space of non-isotropic lines in $\mathbb{C}^2$.  Explicitly, for every $i$, there is a unique $\theta_i\in \mathbb{C}/2\pi\mathbb{Z}$ such that $\alpha_i$ is a multiple of $(\cos(\theta_i/2),\sin(\theta_i/2))$.  For simplicity, we assume that $\alpha_i=(\cos(\theta_i/2),\sin(\theta_i/2))$, and we define $\alpha_i^\perp=(-\sin(\theta_i/2),\cos(\theta_i/2))$, so that $\alpha_i^\perp$ spans $H_i$.  Furthermore, let $q_i=m_i(m_i+1)$.

%
%

In this notation, the first locus equation at $H_i$ becomes
\[0=\sum_{j|j\neq i}\frac{m_j(m_j+1)\langle \alpha_i,\alpha_i\rangle \langle \alpha_i,\alpha_j\rangle }{\langle\alpha_i^\perp,\alpha_j\rangle^3}=\sum_{j|j\neq i} \frac{q_j\cos\left(\frac{\theta_j-\theta_i}{2}\right)}{\sin^3\left(\frac{\theta_j-\theta_i}{2}\right)}\]
The idea now is to interpret the $j$th term of this sum as \emph{`the force a particle at $\theta_j$ with charge $q_i$ exerts on a particle at $\theta_i$'} for some particle interaction.  This will make the above sum a force balancing equation for a particle at $\theta_i$, and the system of first-locus equations will become the requirement that the particle ensemble $(\theta_1,\theta_2,...\theta_n)$ is in equilibrium.

Let $E$ be a collection of distinct points $(\theta_1,\theta_2,...\theta_n)\in \mathbb{C}/2\pi\mathbb{Z}$, together with positive integers $(q_1,q_2,...q_n)$ called the \textbf{charges}.  Define the \textbf{charged trigonometric Calogero-Moser potential} of $E$ to be
\[ \mu(E)=\sum_{\substack{i,j\\i\neq j}}\frac{q_iq_j}{\sin^2\left(\frac{\theta_j-\theta_i}{2}\right)}\]
In the case that $q_1=q_2=...=q_n=1$, this is the usual trigonometric Calogero-Moser potential.  The force acting on the particle at $\theta_i$ is then given by
\[ \frac{\partial\mu}{\partial \theta_i}(E)=q_i\sum_{j|j\neq i}\frac{q_j\cos\left(\frac{\theta_j-\theta_i}{2}\right)}{\sin^3\left(\frac{\theta_j-\theta_i}{2}\right)}\]
Since $q_i$ is a non-zero constant, this gives the desired force balancing condition.  The system $E$ is in equilibrium if all the partial derivatives $\frac{\partial\mu}{\partial\theta_i}(E)$ vanish.  Therefore,
\begin{thm}\label{thm:main1}
Let $\mathcal{A}$ be an arrangement in $\mathbb{C}^2$.  Then TFAE.
\begin{itemize}
\item $\mathcal{A}$ is a first-locus configuration.
\item The collection of particles $E(\mathcal{A})$ with particles at $(\theta_1,\theta_2,...\theta_n)$ and with charges $(m_1(m_1+1),m_2(m_2+1),...m_n(m_n+1))$ is in equilibrium for the charged trig. CM potential $\mu(E(\mathcal{A}))$.
\item $E(\mathcal{A})$ is a critical point for the potential $\mu$.
\end{itemize}
\end{thm}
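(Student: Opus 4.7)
The plan is to split the equivalence into two parts. The equivalence $(2) \Leftrightarrow (3)$ is essentially tautological: by definition, the particle at $\theta_i$ is in equilibrium precisely when the net force on it vanishes, and this net force is $-\partial\mu/\partial\theta_i$. Hence the ensemble is in equilibrium exactly when $E(\mathcal{A})$ is a critical point of $\mu$. The substance of the theorem is therefore in $(1) \Leftrightarrow (2)$, which is a direct computation using the formulas already assembled in the discussion preceding the theorem.

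For $(1) \Leftrightarrow (2)$, I would begin by reducing the first-locus equation at $H_i$ to a single equation. Since $H_i$ is a complex line in $\mathbb{C}^2$, it is spanned by $\alpha_i^\perp$; moreover, each summand in the first-locus equation is homogeneous of degree $-3$ in $x$, so the equation holds for all $x\in H_i$ if and only if it holds at $x=\alpha_i^\perp$. Substituting the parametrization $\alpha_i=(\cos(\theta_i/2),\sin(\theta_i/2))$, $\alpha_i^\perp=(-\sin(\theta_i/2),\cos(\theta_i/2))$ and applying the angle-subtraction identities (which are valid for $\theta_i,\theta_j\in\mathbb{C}$) gives $\langle\alpha_i,\alpha_j\rangle=\cos((\theta_j-\theta_i)/2)$, $\langle\alpha_i,\alpha_i\rangle=1$, and $\langle\alpha_i^\perp,\alpha_j\rangle=\sin((\theta_j-\theta_i)/2)$. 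Plugging these into the first-locus equation produces exactly the trigonometric sum displayed in the text.

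The final step is to compute $\partial\mu/\partial\theta_i$ directly from the definition of $\mu$ and observe that it equals a nonzero multiple of $q_i$ times this same trigonometric sum. Each unordered pair $\{i,j\}$ contributes two terms to $\mu$, indexed by $(i,j)$ and $(j,i)$; differentiating each with respect to $\theta_i$ and using $\sin(-\theta)=-\sin(\theta)$, $\cos(-\theta)=\cos(\theta)$ shows that both contributions share a sign and combine to give $2q_i\sum_{j\neq i}q_j\cos((\theta_j-\theta_i)/2)/\sin^3((\theta_j-\theta_i)/2)$. Since $q_i\neq 0$, vanishing of $\partial\mu/\partial\theta_i$ is equivalent to vanishing of the first-locus sum, completing the equivalence.

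The only step that requires care is this last sign-tracking in the partial derivative: one must verify that the $(i,j)$ and $(j,i)$ contributions reinforce rather than cancel, and that the resulting prefactor is a nonzero scalar. Once this bookkeeping is done, the theorem is just the dictionary between the two formulations.
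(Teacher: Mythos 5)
Your proposal is correct and follows essentially the same route as the paper, whose ``proof'' is the discussion preceding the theorem: reduce the first-locus equation at $H_i$ to the single point $\alpha_i^\perp$, rewrite it trigonometrically via the parametrization $\alpha_i=(\cos(\theta_i/2),\sin(\theta_i/2))$, and identify the resulting sum with $\partial\mu/\partial\theta_i$ up to the nonzero factor $q_i$ (your factor $2q_i$ from summing over ordered pairs is the same harmless constant). Your extra care with the homogeneity reduction and the $(i,j)$/$(j,i)$ sign bookkeeping only makes explicit what the paper leaves implicit.
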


\section{Existence and Uniqueness of Real Equilibria.}

Let $\mathcal{A}$ be a real arrangement in $\mathbb{C}^2$; that is, the $\alpha_i$ may be chosen to have real coordinates.  This implies that the $\theta_i$ are also real, and so $\theta_i\in \mathbb{R}/2\pi\mathbb{Z}\simeq S^1$.  The charged trigonometric CM potential then describes a repelling force between pairs of charged particles on the circle. The dynamical perspective gives helpful intuition to the problem of finding equilibria, because a collection of repelling particles on a compact space should trend toward some equilibrium (provided there is some dampening effect).  We now make this argument precise.

We say that a collection of particles $E$ on $S^1$ is \textbf{cyclically ordered} if the $\theta_i$ occur in the correct cyclic order; that is, if there is some $i$ such that
\[\theta_i<\theta_{i+1}<...<\theta_n<\theta_1<...\theta_{i-1}\]
for representatives of the $\theta_i$ in $[0,2\pi)$.

\begin{thm}
Let $\mathbf{q}=(q_1,q_2,...q_n)$ be a list of positive integers.  Then there exists a cyclically ordered $E=(\theta_1,\theta_2,...\theta_n), \theta_i\in \mathbb{R}/2\pi\mathbb{Z}$ with charge $\mathbf{q}$ such that $E$ is a critical point of $\mu$; that is, that $E$ is in equilibrium for the charged trig. CM potential.  Furthermore, this $E$ is unique up to simultaneous rotation of the system.
\end{thm}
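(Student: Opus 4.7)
My plan is to reduce the statement to a convex optimization problem on an open convex set. Fix the cyclic order with representatives, use the rotation freedom to set $\theta_1 = 0$, and work on
\[ \Sigma := \{(\theta_2, \ldots, \theta_n) : 0 < \theta_2 < \theta_3 < \cdots < \theta_n < 2\pi\} \subset \mathbb{R}^{n-1}, \]
which is open and convex. Since $\mu$ is invariant under simultaneous rotation, $\partial_{\theta_1}\mu + \cdots + \partial_{\theta_n}\mu \equiv 0$, so critical points of $\mu$ on the quotient by rotation correspond bijectively to critical points of $\mu|_\Sigma$. Thus it suffices to produce a unique critical point of $\mu|_\Sigma$.

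For existence, I would show that $\mu$ is proper on $\Sigma$. Every summand $q_iq_j/\sin^2((\theta_j-\theta_i)/2)$ is positive, so $\mu$ is bounded below by $0$. As the configuration approaches any point of $\partial\Sigma$, some pair of particles collides and the corresponding term diverges to $+\infty$. Hence sublevel sets of $\mu$ are compact, and $\mu|_\Sigma$ attains a global minimum, which is in particular a critical point.

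Uniqueness will follow from strict convexity of $\mu|_\Sigma$. Setting $g(\theta) := \csc^2(\theta/2)$, I would write $\mu = \sum_{i<j} 2q_iq_j\, g(\theta_j - \theta_i)$. A direct computation yields
\[ g''(\theta) = \csc^2(\theta/2)\left[\cot^2(\theta/2) + \tfrac{1}{2}\csc^2(\theta/2)\right], \]
which is strictly positive on $(0, 2\pi)$. Consequently the Hessian of $\mu$ is
\[ D^2\mu(v,v) = \sum_{i<j} 2q_iq_j\, g''(\theta_j - \theta_i)(v_j - v_i)^2 \geq 0, \]
with equality if and only if $v_j = v_i$ for all $i,j$, i.e., $v$ lies in the constant (rotation) direction. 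Since this direction has been killed by the normalization $\theta_1 = 0$, the Hessian is strictly positive definite on $\Sigma$, so $\mu|_\Sigma$ is strictly convex and has at most one critical point.

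The main (and essentially only) nontrivial step is the convexity calculation $g'' > 0$ on $(0, 2\pi)$; once that is in hand, existence and uniqueness are immediate from the standard fact that a proper strictly convex function on an open convex set attains its minimum at a unique point.
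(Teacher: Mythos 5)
Your proof is correct, and it rests on the same two pillars as the paper's: strict convexity of the pairwise potential $\csc^2(\theta/2)$ on $(0,2\pi)$ for uniqueness, and blow-up of $\mu$ at collisions (properness) for existence. The difference is in how the rotation degeneracy is handled. You gauge-fix $\theta_1=0$, restrict to the open bounded simplex $\Sigma$, and show the reduced potential is \emph{strictly} convex by computing the Hessian, $D^2\mu(v,v)=\sum_{i<j}2q_iq_j\,g''(\theta_j-\theta_i)(v_j-v_i)^2$, whose only null direction is the constant vector, which the normalization kills; uniqueness is then immediate. The paper instead works on the full space $X$ of cyclically ordered configurations, where $\mu$ is convex but not strictly so: it gets existence from coercivity, notes that convexity makes every critical point an absolute minimum, and then derives uniqueness up to rotation by a segment argument --- if two minima $E,E'$ had some differing pairwise gap $\theta_j-\theta_i$, the strict convexity of that single term along the line $tE+(1-t)E'$ would force $\mu$ strictly below its endpoint values, a contradiction. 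Your reduction buys a cleaner, quantitative statement (positive-definite Hessian on the slice, hence a nondegenerate unique minimizer, which also justifies your remark that Newton-type methods converge), at the cost of the explicit computation $g''(\theta)=(3/2-\sin^2(\theta/2))/\sin^4(\theta/2)>0$; the paper's argument avoids any Hessian computation but only yields uniqueness up to rotation rather than nondegeneracy. One small point worth making explicit in your write-up: every boundary point of $\Sigma$ does force a collision of some \emph{adjacent} pair (including $\theta_n\to 2\pi$ colliding with $\theta_1$), so the divergence claim is justified, and since $\Sigma$ is bounded, sublevel sets are indeed compact.
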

\begin{proof}
Let $X$ be the space of all cyclically ordered $(\theta_1,\theta_2,...\theta_n)$ in $\mathbb{R}/2\pi\mathbb{Z}$.  It is a connected component of $(\mathbb{R}/2\pi\mathbb{Z})^n$ minus the `fat diagonal', those points where any two of the coordinates coincide.  $X$ is convex, in the sense that for any two points $E, E'\in X$, it contains the straight line $tE+(1-t)E', 0\leq t\leq 1$ connecting them.

Let $U(\theta)=\sin^{-2}(\theta/2)$ be the pairwise trig. CM potential.  The important facts about $U$ are: it is strictly convex on $(0,2\pi)$, it is bounded below, and it approaches $+\infty$ at either boundary.  The charged trig. CM potential $\mu$ can be written in terms of $U$.
\[\mu(E)=\sum_{\substack{i,j\\i\neq j}}q_iq_jU(\theta_j-\theta_i)\]
From the form of this equation, it is clear that $\mu$ is convex and bounded-below on $X$ (though no longer strictly convex), and that $\mu(E)$ approaches $+\infty$ as $E$ approaches the boundary of $X$.  Because $\mu$ is convex on a convex domain, its set of critical points is a convex subset of $X$.  Because $\mu$ is bounded below and it approaches $+\infty$ on the boundary, it has an absolute minimum; by the previous remark, then every critical point is an absolute minimum.

Let $E$ and $E'$ both be absolute minima of $\mu$ on $X$, and assume that there are $i$ and $j$ such that $\theta_j-\theta_i\neq \theta'_j-\theta'_i$.  On the straight line connecting $E$ and $E'$, $U(\theta_j-\theta_i)$ is strictly convex because the argument is non-constant, and so its value on the interior is strictly less than the value at the endpoints.  However, because the other terms in $\mu$ are also convex, the value of $\mu$ at any interior point on the line is strictly less than $\mu(E)$ and $\mu(E')$, which contradicts the $E$ and $E'$ being absolute minima.  Therefore, $\theta_j-\theta_i=\theta'_j-\theta'_i$ for all $i,j$, and so $E$ and $E'$ differ by a rotation.
\end{proof}

\begin{coro}
Let $\mathbf{m}=(m_1,m_2,...m_n)$ be a list of positive integers.  Then there is a real first-locus configuration $\mathcal{A}_{\mathbf{m}}$ in $\mathbb{C}^2$ whose cyclic list of multiplicities is $\mathbf{m}$, and $\mathcal{A}_{\mathbf{m}}$ is unique up to rotation of the system.
\end{coro}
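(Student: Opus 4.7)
The plan is to combine the preceding equilibrium theorem with Theorem \ref{thm:main1}; the corollary is essentially a translation of the one into the language of the other. Concretely, I would set $q_i := m_i(m_i+1)$ (which is a positive integer since $m_i\geq 1$) and apply the preceding theorem to the charge vector $\mathbf{q}=(q_1,\ldots,q_n)$. This produces a cyclically ordered equilibrium configuration $E=(\theta_1,\ldots,\theta_n)$ on $\mathbb{R}/2\pi\mathbb{Z}$ for the charged trig. CM potential, unique up to simultaneous rotation.

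For existence, I would then define $\mathcal{A}_{\mathbf{m}}$ to be the real arrangement in $\mathbb{C}^2$ whose $i$-th hyperplane is normal to $\alpha_i := (\cos(\theta_i/2),\sin(\theta_i/2))$ and carries multiplicity $m_i$. By construction the ensemble attached to this arrangement is exactly $E(\mathcal{A}_{\mathbf{m}})=E$ with charges $q_i=m_i(m_i+1)$, and the cyclic list of multiplicities is $\mathbf{m}$. Since $E$ is in equilibrium for $\mu$, Theorem \ref{thm:main1} (the implication from equilibrium to first-locus) tells us that $\mathcal{A}_{\mathbf{m}}$ is a first-locus configuration, as required.

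For uniqueness, suppose $\mathcal{A}'$ is any other real first-locus configuration in $\mathbb{C}^2$ with cyclic multiplicities $\mathbf{m}$. By Theorem \ref{thm:main1} in the reverse direction, $E(\mathcal{A}')$ is a cyclically ordered equilibrium on $S^1$ with the same charge vector $\mathbf{q}$, so by the uniqueness clause of the preceding theorem $E(\mathcal{A}')$ differs from $E$ by a simultaneous shift $\theta_i\mapsto \theta_i+\phi$. Under the half-angle parametrization $\alpha_i=(\cos(\theta_i/2),\sin(\theta_i/2))$, this shift rotates each $\alpha_i$ by $\phi/2$ and hence rotates the arrangement rigidly in $\mathbb{R}^2\subset\mathbb{C}^2$; thus $\mathcal{A}'$ and $\mathcal{A}_{\mathbf{m}}$ agree up to a rotation of the system.

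There is no genuine obstacle here, since all the analytic content sits in the preceding theorem (convexity, boundary blow-up) and in Theorem \ref{thm:main1} (the identification of first-locus equations with force balancing). The only small subtlety worth flagging is the factor of two between the angular coordinate on the particle space and the angular coordinate on the arrangement, which is an artifact of parametrizing lines through the origin in $\mathbb{R}^2$ by $\theta\in\mathbb{R}/2\pi\mathbb{Z}$ via $\theta\mapsto(\cos(\theta/2),\sin(\theta/2))$; this costs nothing in the uniqueness statement since a rotation in either space is still a rotation.
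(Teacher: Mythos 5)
Your proposal is correct and is exactly the argument the paper intends: the corollary is stated without proof as an immediate translation of the equilibrium theorem through Theorem \ref{thm:main1}, with charges $q_i=m_i(m_i+1)$ and the half-angle dictionary between particle positions and normal vectors. Your explicit handling of both directions (existence and uniqueness) and the factor-of-two remark about rotations fill in precisely the routine details the paper leaves tacit.
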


From now on, $\mathcal{A}_{\mathbf{m}}$ will denote this arrangement.


\section{Real Locus Configurations.}

By the previous section, the existence and uniqueness of real 2D first-locus configurations is clear.  Therefore, to show $\mathcal{A}_{\mathbf{m}}$ is a locus configuration, it is sufficient to show that $\mathcal{A}_{\mathbf{m}}$ is coarsely Coxeter.  Fortunately, the uniqueness of $\mathcal{A}_{\mathbf{m}}$ means that symmetries in the multiplicities $\mathbf{m}$ will imply symmetries in $\mathcal{A}_{\mathbf{m}}$.

\begin{lemma}
Let $\mathbf{m}=(m_1,m_2,...m_n)$ be a list of positive integers.  Let $i$ be such that, for all $j$, $m_{i+j}=m_{i-j}$.\footnote{The indices on $m$ on are taken mod $n$ here and afterward.}  Then reflection across the hyperplane $H_i$ leaves the arrangement $A_{\mathbf{m}}$ invariant.
\end{lemma}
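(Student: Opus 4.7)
The plan is to exploit the uniqueness (up to rotation) of the first-locus configuration $\mathcal{A}_{\mathbf{m}}$ established in the previous Corollary, using the palindrome hypothesis on multiplicities as a symmetry input.

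First, I would verify that the reflected arrangement $R_i\mathcal{A}_{\mathbf{m}}$ is itself a first-locus configuration. The charged trig.\ CM potential $\mu$ is invariant under all isometries of the parameter space (it depends only on $\sin^2$ of differences of the $\theta_j$), so applying any reflection or rotation to an equilibrium yields another equilibrium. By Theorem \ref{thm:main1} the first-locus property passes to $R_i\mathcal{A}_{\mathbf{m}}$, with the same multiplicity attached to each reflected hyperplane.

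Next, I would read off the cyclic list of multiplicities of $R_i\mathcal{A}_{\mathbf{m}}$. Since $R_i$ fixes $H_i$ (with its multiplicity $m_i$) and is orientation-reversing, traversing the hyperplanes of $R_i\mathcal{A}_{\mathbf{m}}$ counterclockwise starting from $H_i$ yields the list $(m_i, m_{i-1}, m_{i-2}, \ldots, m_{i+1})$, i.e.\ the reverse of the original list $(m_i, m_{i+1}, \ldots, m_{i-1})$ read from the same starting point. The hypothesis $m_{i+j}=m_{i-j}$ says exactly that these two cyclic lists coincide.

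Then I would invoke the Corollary: the first-locus configuration with prescribed cyclic list of multiplicities is unique up to rotation. Labeling both $\mathcal{A}_{\mathbf{m}}$ and $R_i\mathcal{A}_{\mathbf{m}}$ so that $H_i$ is the first hyperplane in the cyclic order, uniqueness supplies a rotation $\rho$ with $\rho\mathcal{A}_{\mathbf{m}}=R_i\mathcal{A}_{\mathbf{m}}$ that sends the ``first'' hyperplane of the former to the ``first'' hyperplane of the latter. Since $R_iH_i=H_i$, the rotation $\rho$ must fix $H_i$ as a line through the origin; any planar rotation fixing a line through the origin acts trivially on the space of lines, so $\rho=\mathrm{id}$ and $R_i\mathcal{A}_{\mathbf{m}}=\mathcal{A}_{\mathbf{m}}$.

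The main subtlety — and the only place care is needed — is step four: one must check that the rotation produced by the uniqueness theorem is the specific one that matches $H_i$ to $H_i$, rather than some other rotation that identifies the two arrangements in a shifted way. This is handled by cyclically relabeling both arrangements so that $H_i$ is in position $1$; the palindrome hypothesis ensures both then present the same cyclic list of multiplicities starting from position $1$, so the uniqueness theorem delivers exactly the rotation one needs. Everything else is bookkeeping about how reflection reverses cyclic order on $S^1$.
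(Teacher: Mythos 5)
Your proposal is correct and follows essentially the same route as the paper: reflection preserves the first-locus (equilibrium) property, the reflected arrangement has the reversed cyclic multiplicity list which the palindrome hypothesis identifies with $\mathbf{m}$, and the uniqueness-up-to-rotation corollary plus the fact that $R_i$ fixes $H_i$ forces the identifying rotation to act trivially. Your extra care about relabeling so that $H_i$ sits in the same position of both cyclic lists, and the observation that a rotation fixing a line acts trivially on the arrangement of lines, only makes explicit what the paper's terser argument leaves implicit.
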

\begin{proof}
Let $\mathbf{m}'$ denote the list of integers such that $m'_{i+j}=m_{i-j}$.  Let $R_i\mathcal{A}_{\mathbf{m}}$ denote the reflection of the arrangement $\mathcal{A}_{\mathbf{m}}$ across the hyperplane $H_i$.  Being a first-configuration is preserved by reflection, and so $R_i\mathcal{A}_{\mathbf{m}}$ is a 2D real first-locus configuration, with cyclically ordered multiplicities $\mathbf{m}'$.  By the uniqueness theorem, $R_i\mathcal{A}_{\mathbf{m}}$ is a rotation of $\mathcal{A}_{\mathbf{m}'}$.  However, because $H_i$ is fixed by $R_i$, this rotation must be the identity, and so $R_i\mathcal{A}_{\mathbf{m}}=\mathcal{A}_{\mathbf{m}'}$.
\end{proof}

This means whether or not $\mathcal{A}_{\mathbf{m}}$ is coarsely Coxeter can be read off from $\mathbf{m}$.

\begin{defn}
Let $\mathbf{m}=(m_1,m_2,...m_n)$ be a list of positive integers.  $\mathbf{m}$ is \textbf{coarsely symmetric} if for every $i$ such that $m_i>1$, and for all $j$, $m_{i+j}=m_{i-j}$.
\end{defn}
The following lemma is then immediate from the definition and the prior lemma.
\begin{lemma}\label{lemma:coarse}
Let $\mathbf{m}$ be a list of positive integers.  Then $\mathcal{A}_{\mathbf{m}}$ is coarsely Coxeter if and only if $\mathbf{m}$ is coarsely symmetric.
\end{lemma}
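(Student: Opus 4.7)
The plan is to verify the biconditional one direction at a time, since the prior lemma provides exactly one of them and the other requires only a short symmetry argument. For the direction $\mathbf{m}$ coarsely symmetric $\Rightarrow$ $\mathcal{A}_{\mathbf{m}}$ coarsely Coxeter, I fix an arbitrary index $i$ with $m_i>1$. By coarse symmetry of $\mathbf{m}$, we have $m_{i+j}=m_{i-j}$ for all $j$, which is precisely the hypothesis of the prior lemma. That lemma therefore gives $R_i\mathcal{A}_{\mathbf{m}}=\mathcal{A}_{\mathbf{m}}$. Since this holds for every $i$ with $m_i>1$, $\mathcal{A}_{\mathbf{m}}$ satisfies the coarsely Coxeter condition by definition.

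For the converse, I again fix $i$ with $m_i>1$ and assume $R_i$ preserves $\mathcal{A}_{\mathbf{m}}$ with its multiplicities. The key observation is that under the parameterization $\alpha_k=(\cos(\theta_k/2),\sin(\theta_k/2))$, a brief trigonometric computation shows that $R_i$ (the reflection of $\mathbb{R}^2$ across the line $H_i$) acts on the angular coordinates by $\theta_k\mapsto 2\theta_i-\theta_k$ on $\mathbb{R}/2\pi\mathbb{Z}$. This is the reflection of the circle $\mathbb{R}/2\pi\mathbb{Z}$ about the fixed point $\theta_i$, and it reverses the cyclic order of the remaining $\theta_k$. Consequently the hyperplane at cyclic position $i+j$ must be sent to the one at position $i-j$, and preservation of multiplicities by $R_i$ forces $m_{i+j}=m_{i-j}$. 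As $i$ was arbitrary among indices with $m_i>1$, $\mathbf{m}$ is coarsely symmetric.

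The only step that is not a pure definition chase is confirming the angular description of $R_i$ and the induced pairing $(i+j)\leftrightarrow(i-j)$ on cyclic positions. This is elementary: a vector at angle $\phi$ reflected across the line through the origin at angle $\theta_i/2+\pi/2$ lands at angle $\theta_i+\pi-\phi$, and since $(\theta_1,\ldots,\theta_n)$ is cyclically ordered around $S^1$, an orientation-reversing involution of $S^1$ with fixed point $\theta_i$ must match neighbors of $\theta_i$ symmetrically. No further obstacle arises, which is why the author declares the lemma immediate.
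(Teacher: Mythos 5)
Your proof is correct and follows the route the paper intends: the paper declares the lemma immediate from the definition and the preceding reflection lemma, and your forward direction is exactly that lemma applied at each $i$ with $m_i>1$. Your converse --- identifying $R_i$ on angular coordinates as $\theta\mapsto 2\theta_i-\theta$, an orientation-reversing involution of $\mathbb{R}/2\pi\mathbb{Z}$ fixing $\theta_i$, hence pairing cyclic positions $i+j$ and $i-j$ and forcing $m_{i+j}=m_{i-j}$ --- is precisely the detail the paper leaves implicit, and it is carried out correctly.
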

Finally, we arrive at the main result of this section.

\begin{thm}\label{thm:main2}
Let $\mathbf{m}$ be a list of positive integers, which is coarsely symmetric.  Then there exists a real 2D locus configuration $\mathcal{A}_{\mathbf{m}}$ with cyclically ordered multiplicities $\mathbf{m}$, which is unique up to rotation of $\mathcal{A}_{\mathbf{m}}$.
\end{thm}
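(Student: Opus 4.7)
The plan is to simply assemble the pieces already developed in the preceding sections, as the theorem is essentially a combination of Proposition \ref{prop:locus}, the existence/uniqueness corollary for real first-locus configurations, and Lemma \ref{lemma:coarse}.

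First I would invoke the corollary following the equilibrium existence theorem: for the given list $\mathbf{m}$ of positive integers, there is a real 2D first-locus configuration $\mathcal{A}_{\mathbf{m}}$ in $\mathbb{C}^2$ with cyclically ordered multiplicities $\mathbf{m}$, unique up to rotation. Next, since $\mathbf{m}$ is assumed to be coarsely symmetric, Lemma \ref{lemma:coarse} tells us that $\mathcal{A}_{\mathbf{m}}$ is coarsely Coxeter. Then Proposition \ref{prop:locus} upgrades the first-locus condition to the full locus configuration condition: at every hyperplane of multiplicity $1$ the first-locus equation is the only one to check, and it holds by construction; at every hyperplane of multiplicity $>1$ there is reflection symmetry by the coarsely Coxeter property, so Lemma \ref{lemma:ref} provides all higher locus equations. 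This gives existence.

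For uniqueness, I would observe that every locus configuration is in particular a first-locus configuration (the first locus equation is included in the list of locus equations at each hyperplane). Therefore any real 2D locus configuration with cyclically ordered multiplicities $\mathbf{m}$ is a real 2D first-locus configuration with the same data, and the uniqueness-up-to-rotation statement for such first-locus configurations immediately yields uniqueness-up-to-rotation here.

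There is no real obstacle: the hard analytic work (convexity of the charged trigonometric CM potential, existence and uniqueness of an equilibrium on the configuration space $X$ of cyclically ordered points on $S^1$) was carried out in Section 5, and the symmetry-promotion argument (uniqueness of the first-locus configuration forces the reflection symmetries encoded in $\mathbf{m}$ to actually be realized geometrically) was dispatched in the lemma preceding Lemma \ref{lemma:coarse}. The only subtlety worth double-checking is that ``cyclically ordered multiplicities $\mathbf{m}$'' is interpreted consistently between existence and uniqueness, i.e.\ that the coarse symmetry hypothesis is compatible with the indexing convention used in the uniqueness argument for the first-locus configuration; this is immediate since both statements refer to the same cyclic ordering around $S^1$.
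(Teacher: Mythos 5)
Your proposal is correct and takes essentially the same route as the paper: the paper's proof is exactly the chain ``first-locus by construction, coarsely Coxeter by Lemma \ref{lemma:coarse}, locus configuration by Proposition \ref{prop:locus}.'' Your explicit uniqueness step (any locus configuration is in particular a first-locus configuration, so the corollary's uniqueness applies) is a welcome addition that the paper leaves implicit.
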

\begin{proof}
By the construction of $\mathcal{A}_{\mathbf{m}}$, it is a first-locus configuration.  By Lemma \ref{lemma:coarse}, $\mathcal{A}_{\mathbf{m}}$ is coarsely Coxeter.  Thus, by Proposition \ref{prop:locus}, $\mathcal{A}_{\mathbf{m}}$ is a locus configuration.
\end{proof}

\section{Examples and Questions.}

All real 2D locus configurations in \cite{CFV99} or known to the author are of the form $\mathcal{A}_{\mathbf{m}}$, for $\mathbf{m}$ coarsely symmetric.  These examples are:
\begin{itemize}
\item All 2D Coxeter arrangements arise for an $\mathbf{m}$ which is completely symmetric (that is, $m_{i+j}=m_{i-j}$ for all $i,j$).
\item The arrangements $A_2(m)$ (see \cite{CFV99}, pg. 13) correspond to $\mathbf{m}=(m,1,1)$.
\item The arrangements $C_2(m,l)$ (see \cite{CFV99},pg. 14) correspond to $\mathbf{m}=(m,1,l,1)$.
\end{itemize}
However, Theorem \ref{thm:main2} guarantees the existence of other 2D real locus configurations that have not appeared in the literature.  For example, $\mathbf{m}=(m,1,1,...1)$ for any number of $1$s will correspond to a locus configuration.

We close with a couple of natural questions.
\begin{itemize}
\item \textbf{Is every 2D real locus configuration coarsely Coxeter?}  
    This would mean that every real locus configuration could be obtained by starting with a Coxeter configuration, and adding multiplicity 1 hyperplanes in a symmetric way.  
\item \textbf{Is there a formula for producing $\mathcal{A}_{\mathbf{m}}$ given $\mathbf{m}$?} The dynamic perspective means that such arrangements can be approximated by techniques like Newton's method.  However, because many of the properties of locus configurations and BA functions require exact formula.
\end{itemize}
%
%

\bibliography{MyNewBib}{}
\bibliographystyle{amsalpha}

\end{document}